\pdfoutput=1

\RequirePackage{snapshot}

\RequirePackage{sty/bootstrap}
\AppendInputPath{{sty/}{sty/img/}{fig/}{build/}}

\RequirePackage[twocolumn]{IEEEtranSetup}
\documentclass[conference,final]{IEEEtran}

\ProvideIf[true]{draft}
\IfClassLoadedWithOptionsTF{IEEEtran}{final}{\draftfalse}{}

\ProvideIf[false]{ieee}  
\ProvideIf[false]{ieeepreview}
\ProvideIf[false]{blind} 
\ProvideIf[true]{arxiv} 

\frenchspacing

\ifieee 
  \usepackage[hyperref=false,bibtex]{boilerplate}
\else
  \ifarxiv
    \usepackage[biblatex]{boilerplate}
    
  \else
    \usepackage[bibtex]{boilerplate}
  \fi
\fi

\ProvideDocumentCommand{\orcidlink}{m}{}

\usepackage{csquotes}
\usepackage[capitalize]{cleveref}
\crefname{equation}{}{}
\Crefname{equation}{Equation}{Equations}

\usepackage{nameref}

\addbibresource{short.bib}
\addbibresource{bib.bib}

\NewDocumentCommand \arxivOrcid {sm} {%
  \IfBooleanTF{#1}{,\hspace{-3pt}}{}%
  \ifarxiv\orcidlink{#2}\fi%
}

\title{
  Capacity Bounds for \\
  Identification With Effective Secrecy
}

\date{\today}
\ifblind\else
\author{
  \IEEEauthorblockN{
    Johannes~Rosenberger\arxivOrcid*{0000-0003-2267-3794}\IEEEauthorrefmark{1}
    Abdalla~Ibrahim\arxivOrcid*{0009-0008-7646-6276}\IEEEauthorrefmark{1}
    Boulat~A.~Bash\arxivOrcid*{0000-0002-1205-3906}\IEEEauthorrefmark{2}
    Christian~Deppe\arxivOrcid*{0000-0002-2265-4887}\IEEEauthorrefmark{1}
    Roberto~Ferrara\arxivOrcid*{0000-0002-1991-3286}\IEEEauthorrefmark{1}
    Uzi~Pereg\arxivOrcid{0000-0002-3259-6094}\IEEEauthorrefmark{3}
  }
  \IEEEauthorblockA{
    \IEEEauthorrefmark{1}
    TUM School of Computation, Information and Technology,
    Technical University of Munich,
    \\
    Email: \{\tumail{johannes.rosenberger}
      , \tumail{abdalla.m.ibrahim}
      , \tumail{christian.deppe}
      , \tumail{roberto.ferrara}
    \}@tum.de
  }
  \IEEEauthorblockA{
    \IEEEauthorrefmark{2}
    Electrical and Computer Engineering Department, University of Arizona,
    Email: \uref{mailto:boulat@arizona.edu}{boulat@arizona.edu}
  }
  \IEEEauthorblockA{
    \IEEEauthorrefmark{3}
    Faculty of Electrical and Computer Engineering and
    Helen Diller Quantum Center, \\
    Technion---Israel Institute of Technology,
    Email: \uref{mailto:uzipereg@technion.ac.il}{uzipereg@technion.ac.il}
  }
}
\fi

\makeatletter
\hypersetup{
  pdftitle = \string\@title,
  pdfauthor = {},
  pdfkeywords = {},
  pdflang = {English},
}
\makeatother

\newcommand{\code}{\cC}
\NewDocumentCommand{\defname}{om}{\emph{#2}\IfValueT{#1}{~(#1)}}

\newcommand{\capacity}{\textsf{C}}

\newcommand{\capESID}{\capacity_\textsf{ESID}}
\newcommand{\capDI}{\capacity_\textsf{DI}}

\DeclareMathOperator{\BEC}{BEC}
\DeclareMathOperator{\BSC}{BSC}


\usepackage{tikz}
\newif\ifqtikz\qtikzfalse
\usetikzlibrary{math}
\usetikzlibrary{fpu}
\pgfkeys{/pgf/fpu}
\pgfkeys{/pgf/fpu/output format=fixed}
\tikzmath{
  function fold(\p, \q) { return \p*(1-\q) + \q*(1-\p); };
  function Hbin(\p) { return - \p * log2(\p) - (1 - \p)*log2(1-\p); };
  function Ibsc(\p, \q) { return Hbin(fold(\p, \q)) - Hbin(\q)); };
  function Ibec(\p, \q) {
    return Hbin(\p) - \q * Hbin(\p);
  };
  function IgaussC(\s) { return log2(1 + \s); };
  function IgaussR(\s) { return IgaussC(\s)/2; };
  function IzChannel(\p, \d) { return Hbin(\p * \d) - \p * Hbin(\d); };
  function zChannelPopt(\d) {
    real \g;
    let \g = (1-\d)^((1-\d)/\d);
    return \g/(1 + \d*\g);
  };
}
\pgfkeys{/pgf/fpu=false}

\providecommand \ifqtikz {\iftrue}

\usetikzlibrary{fpu} 
\usetikzlibrary{math} 
\usetikzlibrary{datavisualization} 
\usetikzlibrary{datavisualization.formats.functions}

\ifqtikz

\tikzset{
  font=\sffamily,
  rateBound/identification/.style = {color=identification},
  rateBound/transmission/.style = {dotted,color=transmission}
}

\fi

\newcommand \tikzMutInfsRevDegraded [3][]{

\providecolor{identification}{named}{blue}
\providecolor{transmission}{named}{red}

\tikzmath{
 real \ticklen, \perasure, \pcrossover, \capBSC;
 \ticklen = 0.04; 
 \perasure = #3;
 \pcrossover = #2;
 function IbscBec(\a, \p, \e) { return (1-\e) * ( Hbin(fold(\a,\p)) - Hbin(\p) ); };
}

\begin{tikzpicture}[auto,thick,#1]
  \datavisualization [school book axes={unit=0.25},
    visualize as smooth line/.list={Ixy,Ixz,Iuy,Iuz,Ix1y,It},
    x axis={ticks={step=0.5},label={$p_{X_2}$}},
    y axis={ticks={step=0.5},label={bits/channel use}},
    legend={above right of={x=0.9,y=0.63}},
    style sheet=strong colors,
    style sheet=vary dashing,
    Ixy = {label in legend={text=$I(X;Y)$ [secret ID]}},
    Ixz = {label in legend={text=$I(X;Z)$}},
    Iuy = {label in legend={text=$I(U;Y)$ [ESID]}},
    Iuz = {label in legend={text=$I(U;Z)$}},
    Ix1y = {label in legend={text=$I(X_1;Y)$ [ES transmission]}},
    It = {label in legend={text=$I(U;Y) - I(U; Z)$}},
    data/format=function,
  ]
  data [set=Ixy] {
    var x : interval [0.001:1];
    func y = Ibec(0.5, \perasure) + Ibec(\value x, \perasure);
  }
  data [set=Ixz] {
    var x : interval [0.001:0.999];
    func y = Hbin(\value x);
  }
  data [set=Iuy] {
    var x : interval [0.001:0.999];
    func y = Ibec(0.5, \perasure) + IbscBec(\value x, \pcrossover, \perasure);
  }
  data [set=Iuz] {
    var x : interval [0.001:0.999];
    func y = Ibsc(\value x, \pcrossover);
  }
  data [set=Ix1y] {
    var x : interval [0.001:0.999];
    func y = Ibec(0.5, \perasure);
  }
  data [set=It] {
    var x : interval [0.001:0.999];
    func y = Ibec(0.5, \perasure) + IbscBec(\value x, \pcrossover, \perasure) - Ibsc(\value x, \pcrossover);
  };
\end{tikzpicture}
}

\ifqtikz
\tikzMutInfsRevDegraded{1/8}{0.6866}
\fi

\providecommand \ifqtikz {\iftrue}

\usetikzlibrary{positioning,fit}

\ifqtikz
  \tikzset{
    box/.style={draw, minimum height = .8cm, minimum width = 1cm, inner sep=4pt}, 
    pfeil/.style={->, >=latex}
  }
  \providecommand{\BEC}{\mathrm{BEC}}
  \providecommand{\BSC}{\mathrm{BSC}}
\fi

\NewDocumentCommand \exampleChannelESID { O{} O{1.5} }{

\begin{tikzpicture}[auto,#1]

  \node[box,minimum height=2*#2\baselineskip+2em] (Pu) {$P_{U}$};
  \coordinate[yshift = #2\baselineskip] (Pu1) at (Pu.east);
  \coordinate[yshift = -#2\baselineskip] (Pu2) at (Pu.east);
  \path
    (Pu1)
    ++(4,0)
    node[box](BEC1)  {$\BEC_\epsilon$}

    (Pu2)
    ++(1.6,0)
    node[box] (Pxu2) {$\BSC_q$}
    ++(2.4,0)
    node[box] (BEC2) {$\BEC_\epsilon$}

    (Pu)
   ++(6.5,0)
    node[box,minimum height=2*#2\baselineskip+2em] (Bob) {Bob}
    (Bob.south)
    ++(0,-1)
    node[box] (Willie) {Willie}
  ;

  \coordinate[yshift = #2\baselineskip] (Bob1) at (Bob.west);
  \coordinate[yshift = -#2\baselineskip] (Bob2) at (Bob.west);

  \draw[pfeil] (Pu1) -> (BEC1);
  \draw[pfeil] (Pu2) -> node[above] {$U_2$} (Pxu2);
  \draw[pfeil] (Pxu2.east) -> node[inner sep=0,minimum
          width=4pt,fill,circle,anchor=center] (tap) {} (BEC2);
  \node[above,yshift = 2*#2\baselineskip] (X1) at (tap) {$X_1$};
  \node[above] (X2) at (tap) {$X_2$};
  \draw[pfeil] (tap.center) |- node[above,pos=0.93] {$Z$} (Willie);
  \draw[pfeil] (BEC1.east) -> node[above] {$Y_1$} (Bob1);
  \draw[pfeil] (BEC2.east) -> node[above] {$Y_2$}(Bob2);
  \node[box,fit=(Pu) (Pxu2)] {};

\end{tikzpicture}
}

\ifqtikz
  \exampleChannelESID
\fi

\colorlet{identification}{blue}
\colorlet{transmission}{red}
\tikzset{
	box/.style={draw, minimum height = 1em, minimum width = 1em, inner sep=4pt},
	pfeil/.style={->, >=latex},
  font = {\sffamily\footnotesize}, 
}

\begin{document}

\begin{anfxnote}[author=JR]{}
  \section*{README}
  \label{README}
  \raggedright
  Please do not use \verb|\mathbf|, which does not work for all (math) symbols,
  but instead use the command \verb|\bm|, which is more flexible.
  Mixing both leads to inconsistent output,
  e.g. \verb|\mathbf{X, \alpha}| $\mapsto \mathbf{X, \alpha}$,
  while \verb|\bm{X, \alpha}| $\mapsto \bm{X, \alpha}$.
  The files \verb|sty/shorthands.sty| and \verb|local-shorthands.tex|
  define many shorthands, e.g.
  
  \begin{tabular}{ll}
    \verb|\bX| & $\bX$ \\
    \verb|\bbx| & $\bbx$ \\
    \verb|\cX| & $\cX$ \\
    \verb|\bcX| & $\bcX$ \\
    \verb|\bbN| & $\bbN$ \\
    \verb|\bbR| & $\bbR$ \\
    \verb|\expect| & $\expect$ \\
    \verb|\One| & $\One$ \\
    \verb|\ind{a}| & $\ind{a}$ \\
    \verb|\set{a}| & $\set{a}$ \\
    \verb|\tup{a}| & $\tup{a}$ \\
    \verb|\paren{a}| & $\paren{a}$ \\
    \verb|\intv{a}| & $\intv{a}$ \\
    \verb|\brack{a}| & $\brack{a}$ \\
    \verb|\abs{a}| & $\abs{a}$ \\
    \verb|\card{\cX_1}| & $\card{\cX_1}$ \\
    \verb|\code| & $\code$ \\
    \verb|\capacity| & $\capacity$ \\
    \verb|\capDI| & $\capDI$.
  \end{tabular}
  
  \noindent
  Please define your own shorthands in \verb|local-shorthands.tex|.
\end{anfxnote}

\maketitle

\begin{abstract}
An upper bound to the identification capacity of discrete memoryless wiretap channels is derived under the requirement of semantic effective secrecy, combining semantic secrecy and stealth constraints. A previously established lower bound is improved by applying it to a prefix channel, formed by concatenating an auxiliary channel and the actual channel. The bounds are tight if the legitimate channel is more capable than the eavesdropper's channel. An illustrative example is provided for a wiretap channel that is composed of a point-to-point channel, and a parallel, reversely degraded wiretap channel. A comparison with results for message transmission and for identification with only secrecy constraint is provided.
\end{abstract}

\section{Introduction}
An increasing need for task-oriented and \emph{semantic} communication paradigms
that cater for a variety of tasks with different reliability, robustness,
secrecy and privacy requirements can be observed across modern cyber-physical
systems\cite{Guenduez2023semantic,CabreraEA2021postShannon6G,RezwanCabreraFitzek2022funcomp}. Pioneering work by Shannon \cite{Shannon1948MathematicalTheoryCommunication} emphasized the problem of transmitting messages, where a decoder is expected to decide which message has been sent over a noisy channel among exponentially many possible hypotheses. 

Identification (ID)~\cite{Yao1979SomeComplexityQuestions_conference,
  JaJa1985Identificationiseasier_conference,
  AhlswedeDueck1989Identificationviachannels}
is  a communication task where the receiver, Bob, selects one
of many possible messages and tests whether this particular message
was sent by the transmitter, Alice or not.
It is assumed that Alice does not know Bob's chosen message;
otherwise she could simply answer with “Yes” or “No”, by sending a single bit.
Given the nature of this very specific task,
ID is in stark contrast to the
conventional and general task of uniquely decoding messages,
i.e. estimating which message was sent.
Decoding is general in the sense that Bob can estimate any

function of Alice's message,
while in ID, he can only compute whether
it is the one of his interest or not, hence the function that Bob interested in is a simple indicator function.
However, while the code sizes may only grow exponentially  in the blocklength for the
message-transmission task~\cite{Shannon1948MathematicalTheoryCommunication,
  Yao1979SomeComplexityQuestions_conference,
  JaJa1985Identificationiseasier_conference,
  AhlswedeDueck1989Identificationviachannels},
a \emph{doubly exponential} growth can be achieved in ID,
if randomized encoding is used~\cite{AhlswedeDueck1989Identificationviachannels}.
Randomized encoding allows the number of messages to be restricted not by number of possible distinct codewords, but rather by the number of input distributions to the channel that
are pairwise distinguishable at the
output~\cite{AhlswedeDueck1989Identificationviachannels,HanVerdu1993Approximationtheoryoutput}.
ID can have applications in various domains that span authentication tasks such as
watermarking~\cite{SteinbergMerhav2001Identificationpresenceside,
  Steinberg2002Watermarkingidentificationprivate_conference},
sensor communication~\cite{GuenlueKliewerSchaeferSidorenko2021DoublyExponentialIdentification_conference},
vehicle-to-X communication~\cite{BocheDeppe2018SecureIdentificationWiretap,BocheArendt2021patentV2Xid,RosenbergerPeregDeppe2022IdentificationoverCompound_conference}, among others.

From a more theoretical point of view, there are as well several interesting connections between ID and common
randomness generation~\cite{AhlswedeCsiszar1998Commonrandomnessinformation},
as well as resolvability~\cite{HanVerdu1993Approximationtheoryoutput,
  Steinberg1998Newconversestheory,
  Hayashi2006Generalnonasymptoticasymptotic,
  Watanabe2022MinimaxConverseIdentification}
and soft-covering~\cite[p.~656]{Ahlswede2018CombinatorialMethodsModels},~
\cite{Wyner1975commoninformationtwo,Cuff15,BocheDeppe2019SecureIdentificationPassive}.

These connections lead to remarkable behavior of ID capacities in
relation
with security constraints.
For example, in the semantic

secrecy
regime, one can achieve the same ID rate,
as if there were no security requirement, provided that
the secrecy-capacity of the channel is positive~\cite{AhlswedeZhang1995Newdirectionstheory}, \cite{wiretapwafa}.
Stealth~\cite{cheJaggiEA2014surveyStealth,HouKramerBloch2017EffectiveSecrecyReliability,songJaggiEA2020stealthMultipathJammed,BlochGuenlueYenerOggierPoorSankarSchaefer2021OverviewInformationTheoretic}
requires that the adversary is prevented
from determining whether Alice and Bob are communicating.
As such, it may seem to be an even stronger requirement than secrecy.
In general, however, neither stealth nor secrecy implies the
other~\cite{HouKramerBloch2017EffectiveSecrecyReliability}.
In contrast to
covert communication\cite{BashGoeckelTowsley2013covertAWGN,BashGoeckelTowsleyGuha2015covertLimits,Bloch2016covertResolvability},
stealthy 
signals are not necessarily limited in power,
but need to simulate a default distribution at the attacker's output,
so that the attacker is unable to distinguish valuable information from noise.
Thus, covert communication belongs to the same family of security problems as
stealthy communication~\cite{HouKramerBloch2017EffectiveSecrecyReliability,LentnerKramer20}.

The covert ID capacity for binary-input discrete memoryless
wire-tap channels was determined by Zhang~and~Tan~\cite{ZhangTan2021CovertIdentificationBinary},
and an achievable rate for ID under semantic effective secrecy
has been established 
\cite{IbrahimFerraraDeppe2021IdentificationEffectiveSecrecy_conference}.

Here, we derive new bounds on the ID capacity of discrete
memoryless wiretap channels~\cite{Wyner1975wiretap,BlochBarros2011phys_sec_book}
under the semantic effective secrecy constraint.
We improve the achievable rate from~\cite{IbrahimFerraraDeppe2021IdentificationEffectiveSecrecy_conference}
by observing that the encoder can always prepend an auxiliary channel
to the actual channel given to him. This may increase the achievable rate, e.g.,
if the channel is not more capable, i.e. $I(X; Y) \le I(X; Z)$ for
the input distribution $P_X$ achieving the capacity without security constraints,
the legitimate channel output $Y$, and attacker's output $Z$, but there exists
an auxiliary random variable $U$ such that $U - X - YZ$ is a Markov chain
and $I(U; Y) \ge I(U; Z)$.
Furthermore, we derive a converse bound,
which is tight if the legitimate channel is more capable
than the eavesdropper's one.
We demonstrate the results for a reversely degraded binary erasure broadcast channel,
where each symbol consists of two bits, and for the first bit, the legitimate channel is
stronger, and for the second bit, the eavesdropper's channel is better.
Based on this example, we discuss the relation of the derived bounds, and the relations
to capacities for other communication problems and constraints.
Finally, we discuss future steps needed to obtain a generally tight converse bound.

The ID capacity of the discrete memoryless wiretap channel
exhibits a similar dichotomy
for semantic effective secrecy
as for only semantic secrecy~\cite{AhlswedeZhang1995Newdirectionstheory},
but with a more stringent positivity condition and constraint.
This is because for secrecy, only
a small part of the ID codeword has to be
secret~\cite{AhlswedeZhang1995Newdirectionstheory},
while for effective secrecy, the whole codeword must be stealthy \cite{IbrahimFerraraDeppe2021IdentificationEffectiveSecrecy_conference}.

This work is organized as follows.
\Cref{sec:prelim} sets the notation, channel model and communication task.
In \cref{sec:results}, we review previous results and present our main theorems.
We present the example and discuss the bounds in \cref{sec:example}.
In \cref{sec:auxResults}, we develop an auxiliary ID
converse, which permits convex constraints on the encoding distributions.
It is then used in \cref{sec:esid.dm.proof} to prove the
upper rate bound for effectively secrete ID.
Finally, \cref{sec:conclusion} summarizes the results and discusses futher steps.

An extended version of this paper can be found on
\uref{https://arxiv.org}{arXiv}.

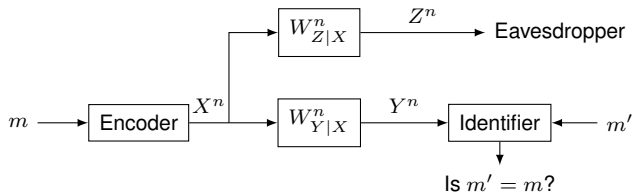
\begin{figure}
	\centering
	\begin{tikzpicture}[scale=1.6]
	
    \path
        node[box] (channel) {$W_{Y|X}^n$}
        ++(0,.75)
        node[box] (wiretap) {$W_{Z|X}^n$}
        ++(2,0)
        node      (eve)     {Eavesdropper}
        (channel)
        ++(-.75,0)
        coordinate (tap)
        (channel)
        ++(-1.5,0)
        node[box] (encoder) {Encoder}
        ++(-1,0)
        node      (message) {$m$}
        (channel)
        ++(1.5,0)
        node[box] (decoder) {Identifier}
        +(0,-.5)
        node      (test)    {Is $m' = m$?}
        ++(1,0)
        node      (rec_message) {$m'$}
    ;

    \draw[pfeil] (message) -> (encoder);
    \draw[pfeil] (encoder) -- node[above] {$X^n$} (tap) -> (channel);
    \draw[pfeil] (tap) |- (wiretap);
    \draw[pfeil] (channel) -> node[above] {$Y^n$} (decoder); 
    \draw[pfeil] (decoder) -> (test);
    \draw[pfeil] (rec_message) -> (decoder);
    \draw[pfeil] (wiretap) -> node[above] {$Z^n$} (eve);
    \end{tikzpicture}
	\caption{\label{fig:cr-scheme}
	  A general identification scheme in the presence of an eavesdropper,
	  where $W_{YZ|X}^n = \prod_{i=1}^n W_{YZ|X}$ is a discrete memoryless wiretap
	  channel. In contrast to conventional message transmission,
	  the receiver does not decode the message $m$ from the channel output
	  $Y^n$, but chooses an $m'$, and performs a statistical hypothesis
	  test to decide whether $m'$ equals $m$ or not.
	  In the effective secrecy setting, the eavesdropper wants to
	  find out whether unexpected communication takes place or not,
	  compared to some expected default behavior, and to
	  identify whether an own message $m''$ equals $m$ or not.
	}
\end{figure}

\section{Preliminaries}
\label{sec:prelim}

The indicator function $\ind{\cdot}$ evaluates to $1$
if its argument is true, and to $0$ if it is false.

The function $\log$ denotes the natural logarithm.

\subsection{Channels}

A \defname{channel} with domain $\cX$ and codomain $\cY$
is a conditional PMF $W : \cX \to \cP(\cY)$.
A \defname[DMC]{discrete memoryless channel $W$} is a sequence
$(W^n)_{n \in \bbN}$, where an input sequence $x^n$
of block length $n \in \bbN$
is mapped to an an output sequence $y^n$
with probability

$
  W^n(y^n|x^n) = \prod_{i=1}^n W(y_i | x_i).
$

A \defname{wiretap channel} is a channel $W_{YZ|X} : \cX \to \cP(\cY \times \cZ)$,
where we assume that for an input $x$,
a legitimate receiver has access to the channel output $Y \sim W_{Y|X=x}$
and a passive adversary has access to to the output $Z \sim W_{Z|X=x}$,
where the output distributions of the marginal channels $W_{Y|X}$ and $W_{Z|X}$
given input $x$ are marginalizations of the joint output distribution $W_{YZ|X=x}$.
A \defname[DMWC]{discrete memoryless wiretap channel $W_{YZ|X}$} is
a sequence $(W_{YZ|X}^n)_{n \in \bbN}$.
A wiretap channel is \defname{(stochastically) degraded} towards $Y$
if there exists a channel $W_{Z|Y}$ such that $W_{Z|X} = W_{Y|X} W_{Z|Y}$.

The \defname[KL divergence]{Kullback-Leibler divergence} between two PMFs
$P,Q \in \cP(\cX)$, where $Q(x) > 0$ if $P(x) > 0$, is defined by

$
  D(P \| Q) = \sum_{x \::\: P(x) > 0} P(x) \log \frac{P(x)}{Q(x)},
$

and two conditional PMFs $W, V : \cX \to \cP(\cY)$,
the conditional KL divergence is defined by
$D(W \| V | P) = \expect_P \brack{ D(W(\cdot|X) \| V(\cdot|X)) }$.
The Shannon entropies and the mutual information are defined as usual.
Note that for $(X,Y) \sim P \times W$ and any $Q \in \cP(\cY)$
such that the following expression is defined, it holds that
\begin{gather}
  D(W \| Q | P) = I(X; Y) + D(PW \| Q).
\end{gather}

\subsection{Identification codes and effective secrecy}

The task of ID with effective secrecy is described as follows:
Consider a wiretap channel $W_{YZ|X}$ where Alice transmits
a signal $X \in \cX$, Bob, the legitimate receiver, receives $Y \in \cY$
and Willie, the adversary, receives $Z \in \cZ$.
Alice encodes a message $m \in [M] = \set{1,\dots,M}$ into $X$ such that
(a) Bob can test reliably whether Alice sent $m'$ or not, for every $m' \in [M]$
\defname{(identification)}
and (b) Willie cannot distinguish whether Alice sent something
sensible or nonsense \defname{(stealth)}, nor
can he identify whether Alice sent any particular $\tilde{m} \in [M]$ \defname{(secrecy)}.
The combination of stealth and secrecy is called \defname{effective secrecy}.
To this end, it is assumed that Willie will always classify the received signal
as suspicious if it differs significantly from a prescribed distribution $Q_Z$.
Let us formally define the involved codes and capacities:

An \defname{$M$-code} for a discrete channel $W_{Y|X} : \cX \to \cP(\cY)$
is a family $\set{ (E_m, \cD_m) }_{m=1}^M$
of encoding distributions $E_m \in \cP(\cX)$
and decision sets $\cD_m \subseteq \cY$.
An $(M, n)$-code for a DMC $W$ is an
$M$-code for the channel $W_n = W^n$,
i.e. $E_m \in \cP(\cX^n)$ and $\cD_m \subseteq \cY^n$.

A \defname[ID code]{$(M | \lambda_1, \lambda_2)$-identification code} is
an $M$-code satisifying the conditions
\begin{align}
  \min_m E_m W_{Y|X} (\cD_m)
    &\ge 1 - \lambda_1,
    \\
  \max_{m, m' \::\: m \neq m'} E_m W_{Y|X} (\cD_{m'})
    &\le \lambda_2.
\end{align}

An
\defname[ESID code]{$(M | \lambda_1, \lambda_2, \delta, Q_Z)$ (semantically) effectively secret ID code}
for a discrete wiretap channel $W_{YZ|X}$
is an $(M | \lambda_1, \lambda_2)$-ID-code
where every encoding distribution $E_m$ simulates $Q_Z$ with precision $\delta > 0$ over
$W_{Z|X}$, i.e.
\begin{gather}
  \max_m D(E_m W_{Z|X} \| Q_Z)
    \le \delta.
\end{gather}

Semantic secrecy means that the constraint must hold for every message,
not only for a particular random message distribution.
Similarly, an $(M, n | \lambda_1, \lambda_2)$ ID code
and an $(M, n | \lambda_1, \lambda_2, \delta, Q_{Z^n})$ ESID code
are defined for DMCs with block length $n$.

The \defname{rate} of an $(M, n)$ ID code is defined as
$R = \frac{1}{n} \log\log M$.
A rate $R$ is $Q_Z$-ESID \defname{achievable} over a wiretap channel $W_{YZ|X}$
if, for all $\lambda_1, \lambda_2, \delta > 0$
and sufficiently large $n$,
there exists an $(2^{2^{nR}}, n | \lambda_1,\lambda_2, \delta, Q_Z)$ ESID code.
The $Q_{Z^n}$-ESID capacity $\capESID(W_{YZ|X}, Q_{Z^n})$
is the supremal rate that is $Q_{Z^n}$-ESID achievable over $W_{YZ|X}$.

\section{Results}
\label{sec:results}

Consider the prior result from \cite{IbrahimFerraraDeppe2021IdentificationEffectiveSecrecy_conference}.
\begin{proposition}[{\cite[Theorem~1]{IbrahimFerraraDeppe2021IdentificationEffectiveSecrecy_conference}}]
  \label{prop:achiev.X}
  For any $Q_Z \in \cP(\cZ)$,
  the $Q_Z^n$-ESID capacity of a discrete memoryless wiretap channel $W_{YZ|X}$ satisfies
  \begin{gather}
    \capESID(W_{YZ|X}, Q_Z^n)
    \ge
      \max_{\substack{
        P_X \in \cP(\cX)
        \\ P_X W_{Z|X} = Q_Z
        \\ I(X; Y) \ge I(X; Z)
      }}
      I(X; Y).
  \end{gather}
\end{proposition}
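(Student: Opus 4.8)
The plan is to realize each encoding distribution $E_m$ as a uniform distribution over a large family of codewords drawn i.i.d.\ from $P_X^n$, so that the single relative-entropy constraint $D(E_m W_{Z|X}^n\|Q_Z^n)\le\delta$ simultaneously enforces stealth and secrecy, while the Ahlswede--Dueck construction supplies the doubly exponential number of pairwise distinguishable messages at Bob. Fix a maximizing $P_X$; by continuity of $I(X;Y)$ and $I(X;Z)$ in $P_X$ it suffices to treat the strict case $I(X;Y)>I(X;Z)$ and to recover the boundary $I(X;Y)=I(X;Z)$ by a limiting argument. Choose a rate $R$ with $I(X;Z)<R<I(X;Y)$ and generate a random codebook $\set{x^n(j)}_{j=1}^{J}$ with $J=2^{nR}$ codewords, each drawn i.i.d.\ from $P_X^n$.

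First I would set up the identification (reliability) layer. Since $R<I(X;Y)$, a standard random-coding typicality decoder lets Bob recover the transmitted index $j$ with vanishing error, so that Alice and Bob effectively share common randomness of rate $R$. On top of this I would invoke the Ahlswede--Dueck second coding step: associate with each identification message $m\in[M]$ a tag set $\mathcal{T}_m\subseteq[J]$ of rate $R>I(X;Z)$ and pairwise small relative overlap, and let $E_m$ be uniform over $\set{x^n(j):j\in\mathcal{T}_m}$. Bob decodes $\hat\jmath$ and accepts $m'$ iff $\hat\jmath\in\mathcal{T}_{m'}$; the missed-identification probability $\lambda_1$ is small because the true index lies in $\mathcal{T}_m$ by construction, while the false-identification probability $\lambda_2$ is controlled by the small pairwise overlap. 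This yields $M=2^{2^{n(R-\epsilon)}}$ messages, hence identification rate $R\to I(X;Y)$, and crucially the full common-randomness rate $R$ is spent on the identification layer, so nothing is sacrificed to a separate secret-message rate $I(X;Y)-I(X;Z)$.

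Next I would verify effective secrecy. Because $P_X W_{Z|X}=Q_Z$, the $Z$-marginal of $P_X^n$ through $W_{Z|X}^n$ equals $Q_Z^n$ exactly, and since each $\mathcal{T}_m$ has rate exceeding $I(X;Z)$, the soft-covering (channel resolvability) lemma in its relative-entropy form gives $\expect D(E_m W_{Z|X}^n\|Q_Z^n)\to 0$ exponentially in $n$. As the single bound $D(E_m W_{Z|X}^n\|Q_Z^n)\le\delta$ forces Willie's output to resemble the default $Q_Z^n$ irrespective of $m$, it delivers stealth and message-secrecy at once, so no separate secrecy estimate is needed.

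The main obstacle is to make the soft-covering bound hold for \emph{every} message simultaneously, as semantic secrecy demands, rather than merely on average: a union bound over the doubly exponentially many tag sets requires the probability of a bad codebook to decay doubly exponentially in $n$. I would therefore rely on the strong concentration of soft covering, whose deviation probability is doubly exponential in $n$ (exponential in the codebook size), and tune the parameters so that its inner exponent dominates the message rate $R$. A secondary point is to couple this concentration with the identification reliability on one and the same random codebook, which I would handle by a single probabilistic-method argument establishing the existence of a codebook meeting all three requirements; the strict chain $I(X;Z)<R<I(X;Y)$, available precisely because $I(X;Y)\ge I(X;Z)$, is exactly what makes the resolvability and decodability budgets jointly feasible.
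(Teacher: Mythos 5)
The paper does not actually prove \cref{prop:achiev.X}; it is quoted verbatim from \cite{IbrahimFerraraDeppe2021IdentificationEffectiveSecrecy_conference}, so there is no in-paper argument to compare against. Your route --- an i.i.d.\ codebook drawn from $P_X^n$ at a rate $I(X;Z)<R<I(X;Y)$, Ahlswede--Dueck tag sets $\mathcal{T}_m$ to obtain $2^{2^{nR-o(n)}}$ messages, and soft covering with doubly-exponential concentration so that the single constraint $D(E_mW_{Z|X}^n\|Q_Z^n)\le\delta$ can be union-bounded over all messages --- is the standard construction and is the one the cited source uses; your remark that stealth and secrecy need no separate treatment is also consistent with the paper's formalization, which imposes exactly that one divergence constraint. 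The only step I would not wave through is the boundary case $I(X;Y)=I(X;Z)$: your ``limiting argument'' presupposes that such a maximizer can be approached from within $\{P_X:\ P_XW_{Z|X}=Q_Z,\ I(X;Y)>I(X;Z)\}$ with negligible loss in $I(X;Y)$, which the constraint set need not permit, and at the boundary itself the reliability requirement $J\le 2^{n(I(X;Y)-\epsilon)}$ and the resolvability requirement $|\mathcal{T}_m|\ge 2^{n(I(X;Z)+\epsilon)}$ are incompatible, so that case needs its own argument rather than continuity.
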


In \cref{prop:achiev.X}, the constraint $I(X; Y) \ge I(X; Z)$
can be relaxed to $I(U; Y) \ge I(U; Z)$,
for any auxiliary random variable $U$ having finite support and
satisfying the Markov condition $U - X - YZ$,
by applying~\cref{prop:achiev.X} to the virtual channel
$P_{YZ|U} = P_{X|U} W_{YZ|X}$:
\begin{corollary}
  The $Q_Z^n$-ESID capacity of a DMWC $W_{YZ|X}$ satisfies
  \begin{gather}
  \label{eq:achiev.U}
    \capESID(W_{YZ|X}, Q_Z^n)
    \ge
      \max_{\substack{
        P_{UX} \in \cP(\cU \times \cX)
        \\ P_X W_{Z|X} = Q_Z
        \\ I(U; Y) \ge I(U; Z)
      }}
      I(U; Y),
  \end{gather}
  \label{corollary:achiev.U}
  where $\cU$ is any finite set.
\end{corollary}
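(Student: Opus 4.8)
The plan is to realize the auxiliary variable $U$ as the input to a \emph{prefix channel} that the encoder prepends to the given wiretap channel, and then to invoke \cref{prop:achiev.X} for the resulting concatenated channel. Fix any joint PMF $P_{UX} \in \cP(\cU \times \cX)$ that is feasible for the maximization in \cref{eq:achiev.U}, and write $P_U$ for its $\cU$-marginal and $P_{X|U}$ for the induced conditional. Since the physical channel acts only on $X$, the chain $U - X - YZ$ holds automatically, and the concatenation $P_{YZ|U} := P_{X|U} W_{YZ|X}$ is a bona fide wiretap channel with input $U$. Because $P_{X|U}$ is applied letter by letter, its $n$-th extension satisfies $P_{YZ|U}^n = P_{X|U}^n W_{YZ|X}^n$, so $(P_{YZ|U})_{n \in \bbN}$ is again a DMWC.

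First I would apply \cref{prop:achiev.X} to this virtual channel with the same target $Q_Z$. Its adversary marginal is $P_{Z|U} = P_{X|U} W_{Z|X}$, so the constraint $P_U P_{Z|U} = Q_Z$ is, after summing over $U$, exactly $P_X W_{Z|X} = Q_Z$ with $P_X := P_U P_{X|U}$; likewise the positivity condition $I(U;Y) \ge I(U;Z)$ and the objective $I(U;Y)$ coincide with the quantities in \cref{eq:achiev.U}, all evaluated under $P_{UX} W_{YZ|X}$. Hence \cref{prop:achiev.X} yields, for every $\lambda_1,\lambda_2,\delta > 0$ and all sufficiently large $n$, a $(2^{2^{nR}}, n \mid \lambda_1,\lambda_2,\delta,Q_Z^n)$ ESID code $\set{(E_m, \cD_m)}_{m}$ for $P_{YZ|U}$ with $E_m \in \cP(\cU^n)$ and rate $R$ approaching $I(U;Y)$.

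The key reduction step is to simulate each such code over the actual channel $W_{YZ|X}$ by folding the prefix into the encoder: replace each encoding distribution $E_m \in \cP(\cU^n)$ by $\tilde{E}_m := E_m P_{X|U}^n \in \cP(\cX^n)$, keeping the decision sets $\cD_m$ unchanged. Because $P_{YZ|U}^n = P_{X|U}^n W_{YZ|X}^n$, the induced output laws match at both receivers,
\begin{gather}
  \tilde{E}_m W_{Y|X}^n = E_m P_{Y|U}^n,
  \qquad
  \tilde{E}_m W_{Z|X}^n = E_m P_{Z|U}^n,
\end{gather}
so the two identification inequalities and the secrecy bound $D(\tilde{E}_m W_{Z|X}^n \| Q_Z^n) \le \delta$ transfer verbatim from the virtual code. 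Thus $\set{(\tilde{E}_m, \cD_m)}_m$ is an ESID code for $W_{YZ|X}$ of the same size, giving $\capESID(W_{YZ|X}, Q_Z^n) \ge \capESID(P_{YZ|U}, Q_Z^n) \ge I(U;Y)$, and maximizing over feasible $P_{UX}$ proves \cref{eq:achiev.U}.

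I do not anticipate a genuinely hard step here; the argument is a clean reduction and the content lies entirely in the bookkeeping. The one point that warrants care is checking that prepending the prefix channel leaves all three code parameters ($\lambda_1$, $\lambda_2$, and $\delta$) simultaneously intact, which hinges on the single identity $P_{X|U}^n W_{YZ|X}^n = P_{YZ|U}^n$ ensuring that the $Y^n$- and $Z^n$-marginals of the concatenated experiment are unchanged. This is exactly the standard prefixing device from wiretap coding, here required to respect the divergence (stealth) constraint in addition to the reliability and secrecy constraints.
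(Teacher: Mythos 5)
Your proposal is correct and follows exactly the paper's route: apply \cref{prop:achiev.X} to the prefix channel $P_{YZ|U} = P_{X|U} W_{YZ|X}$ and absorb the prefix into the encoding distributions, which the paper states in a single sentence and you carry out with the full bookkeeping. The verification that the $Z$-marginal constraint, the positivity condition, and all three code parameters transfer under $P_{X|U}^n W_{YZ|X}^n = P_{YZ|U}^n$ is exactly the intended (and omitted) content of the paper's argument.
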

On the other hand, we prove the following upper bound.
\begin{theorem}
  \label{thm:esid.dm}
  The $Q_Z^n$-ESID-capacity of a DMWC $W_{YZ|X} : \cX \to \cP(\cY \times \cZ)$
  is $0$ if $I(U; Y) < I(U; Z)$ or $P_X W_{Z|X} \ne Q_Z$, for all $P_{UX}$ such that
  $U - X - YZ$ forms a Markov chain. Otherwise, it
  satisfies
  \begin{gather}
  \label{eq:esid.dm}
    \capESID(W_{YZ|X}, Q_Z^n) \le
    \max_{\substack{
      P_{UX} \in \cP(\cU \times \cX)
      \\ P_X W_{Z|X} = Q_Z
      \\ I(U; Y) \ge I(U; Z)
    }}
    I(X; Y),
\end{gather}
  where $\card\cU \le \card\cX + 2$.
  If $W_{Y|X}$ is more capable than $W_{Z|X}$,
  \begin{gather}
    \capESID(W_{YZ|X}, Q_Z^n) =
    \max_{\substack{
      P_X \in \cP(\cX)
      \\ P_X W_{Z|X} = Q_Z
    }}
    I(X; Y).
  \end{gather}
\end{theorem}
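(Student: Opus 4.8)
The plan is to derive the upper bound as a constrained identification converse, specializing the convex-constraint identification converse established in \cref{sec:auxResults}. The key structural observation is that the effective-secrecy requirement $D(E_m W_{Z|X}^n \| Q_Z^n) \le \delta$ is convex in the encoding distribution $E_m$, since $E \mapsto D(E W_{Z|X}^n \| Q_Z^n)$ is convex. Consequently, every $(2^{2^{nR}}, n \mid \lambda_1, \lambda_2, \delta, Q_Z^n)$ ESID code is in particular an identification code for $W_{Y|X}^n$ all of whose encoding distributions lie in the convex set $\{E \in \cP(\cX^n) : D(E W_{Z|X}^n \| Q_Z^n) \le \delta\}$, and I would feed exactly this set into the auxiliary converse to obtain a single-letter bound of the form $R \le \max I(X;Y) + o(1)$, the maximization being over the single-letterization of that set.

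It then remains to identify this single-letterization. For the stealth part, writing $P_{Z^n} = E_m W_{Z|X}^n$ and using the decomposition $D(P_{Z^n} \| Q_Z^n) = \sum_{i} D(P_{Z_i} \| Q_Z) + \bigl(\sum_i H(Z_i) - H(Z^n)\bigr)$ together with non-negativity of the total correlation gives $\sum_i D(P_{Z_i} \| Q_Z) \le \delta$; convexity of the divergence then yields $D(\bar P_X W_{Z|X} \| Q_Z) \le \delta/n$ for the time-averaged input marginal $\bar P_X = \frac1n \sum_i P_{X_i}$, so letting $n \to \infty$ and then $\delta \to 0$ forces the limiting single-letter law to satisfy $P_X W_{Z|X} = Q_Z$. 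The auxiliary variable $U$ and the inequality $I(U;Y) \ge I(U;Z)$ emerge from the same single-letterization, with a uniform time-sharing variable (augmented by the side information the converse exposes) playing the role of $U$: the Markov chain $U - X - YZ$ holds by the memoryless structure, the rate is governed by $I(X;Y)$, and the competition between identifiability at Bob and stealth at Willie forces $I(U;Y) \ge I(U;Z)$. Standard cardinality-bounding techniques — preserving $P_X$ (hence the objective $I(X;Y)$) together with $I(U;Y)$ and $I(U;Z)$ — then give $\card\cU \le \card\cX + 2$. The degenerate case is immediate from the same estimate: if no $P_X$ satisfies $P_X W_{Z|X} = Q_Z$, then the image $\{P_X W_{Z|X} : P_X \in \cP(\cX)\}$ is a compact convex set not containing $Q_Z$, so $\eta := \inf_{P_X} D(P_X W_{Z|X} \| Q_Z) > 0$ and $\sum_i D(P_{Z_i} \| Q_Z) \ge n\eta$ contradicts the bound $\le \delta$ for large $n$; stealth is then infeasible and the capacity is $0$, which is exactly the stated dichotomy.

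For the matching statement, I would pair the upper bound with the achievable rate of \cref{corollary:achiev.U}. When $W_{Y|X}$ is more capable than $W_{Z|X}$, the choice $U = X$ is admissible in \cref{eq:achiev.U} for every $P_X$ with $P_X W_{Z|X} = Q_Z$, since $I(X;Y) \ge I(X;Z)$ always holds; the achievable rate is therefore at least $\max_{P_X : P_X W_{Z|X} = Q_Z} I(X;Y)$, which already coincides with the upper bound in \cref{eq:esid.dm} (there the constraint $I(U;Y) \ge I(U;Z)$ is inactive at $U = X$ under the more-capable hypothesis), giving equality. The main obstacle I anticipate is the auxiliary converse itself: proving a Han--Verd\'u-type resolvability/soft-covering converse that survives an arbitrary convex constraint on the encoding distributions, and then extracting from it the precise auxiliary $U$ for which the rate bound $I(X;Y)$ and the secrecy-induced inequality $I(U;Y) \ge I(U;Z)$ hold at once. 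Keeping the stealth penalty $o(1)$ while preserving the identification distinguishability at Bob is the delicate bookkeeping.
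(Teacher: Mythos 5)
Your overall architecture matches the paper's: you invoke the convex-constraint identification converse of \cref{sec:auxResults} on the convex set of stealthy encoders, single-letterize the stealth constraint via the superadditivity of divergence to get $D(P_{X_T} W_{Z|X} \| Q_Z) \le \delta/n$, handle the infeasibility of $P_X W_{Z|X} = Q_Z$ by compactness, and obtain the more-capable equality by pairing with \cref{corollary:achiev.U}. Those parts are sound and essentially identical to the paper's proof.

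However, there is a genuine gap at the heart of the theorem: the derivation of the constraint $I(U;Y) \ge I(U;Z)$. You assert that it ``emerges from the same single-letterization, with a uniform time-sharing variable \dots playing the role of $U$'' because of ``the competition between identifiability at Bob and stealth at Willie,'' but you give no mechanism, and the time-sharing variable $T$ alone cannot produce this inequality. In the paper this step requires three separate ingredients: (i) Ahlswede's Lemma 90, which converts the reliability of the ID code into the lower bound $\max_{P_A \in \cP([M])} I(A;Y^n) \ge \delta$ for a message random variable $A$; (ii) the stealth constraint together with $D(W\|Q|P) \ge I(X;Y)$ to get $\max_{P_A} I(A;Z^n) \le \delta$, so that some $P_A$ satisfies $I(A;Y^n) - I(A;Z^n) \ge 0$; and (iii) the Csisz\'ar sum identity to single-letterize this \emph{difference} of multi-letter mutual informations, yielding the auxiliary $V = (T, Y^{T-1}, Z_{T+1}^n)$ and $B = (V,A)$ with $I(B;Y_T|V) \ge I(B;Z_T|V)$, after which cardinality reduction gives $U$ with $\card\cU \le \card\cX + 2$. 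Without this chain you obtain only the weaker bound $\max_{P_X W_{Z|X} = Q_Z} I(X;Y)$, which neither yields the constrained maximum in \cref{eq:esid.dm} nor proves the zero-capacity claim in the regime where $P_X W_{Z|X} = Q_Z$ is feasible but $I(U;Y) < I(U;Z)$ for every admissible $P_{UX}$ --- a case your dichotomy argument (which only treats infeasibility of the output constraint) does not address; the paper handles it via Ahlswede's Lemmas 89, 90 and Remark 92.
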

The proof follows in Section \ref{sec:esid.dm.proof}.

\begin{remark}
  The lower bound in \cref{corollary:achiev.U}
  and the upper bound in \cref{thm:esid.dm} coincide only
  for channels where the optimal capacity is achieved
  with $U = X$.
  In the following section, we demonstrate the gap
  at the example of a reversely degraded wiretap channel,
  where $U \ne X$ is optimal.
\end{remark}
\begin{remark}
  If the $Q_Z^n$-ESID capacity is zero,
  then effectively secret communication with any positive rate
  impossible.
  Yet, this does not necessarily imply that communication is
  impossible. It can simply mean that the code size grows slower than
  doubly-exponentially in the block length, since we defined the
  rate as $R = \frac{1}{n} \log\log M$.
  
  For example, for covert codes $O(\sqrt n)$ bits can be sent in $n$ channel uses.  However,
  Ahlswede~\cite[Lemmas 89, 90 and Remark 92]{Ahlswede2021IdenticationOtherProbabilistic}
  proved that, if for sufficiently small $\lambda_1,\lambda_2,\delta$
  and sufficiently large $n$,
  the secrecy condition $I(U; Y) \ge I(U; Z)$ is
  violated for all $U$, then secret communication is impossible,
  hence also effectively secret communication.
\end{remark}

\section{Example: Reversely degraded broadcast channels}
\label{sec:example}

\begin{figure}
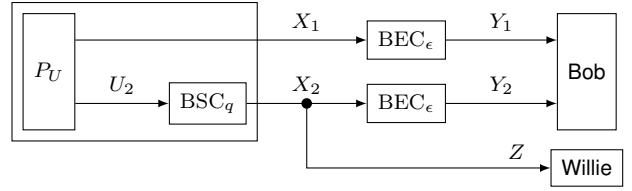

    \centering
    \exampleChannelESID[xscale=1.1,yscale=0.5][1]
    \caption{\label{fig:revDegraded}
      Structure of the product of two reversely degraded broadcast example 
      }
\end{figure}
To demonstrate the relation of the capacity bounds in
\cref{prop:achiev.X,corollary:achiev.U,thm:esid.dm}
and contrast it with message transmission,
we consider two reversely degraded binary broadcast channels
used in parallel,
as shown in \cref{fig:revDegraded}.
In this section, in a binary setting, we let $\log := \log_2$.
Every input symbol consists of two bits,
i.e. $\cX = \set{ 0,1 }^2$.
The bit $X_1$ is sent only to Bob,
over a binary erasure channel, $\BEC_\epsilon$,
described by the transition matrix

$
  \BEC_\epsilon =
  \mleft(\begin{smallmatrix}
    1 - \epsilon  & \epsilon & 0 \\
    0  & \epsilon & 1 - \epsilon
  \end{smallmatrix}\mright),
  $

where $\frac{1}{2} < \epsilon \le 1$
is the erasure probability, the output alphabet
is $\set{0,e,1}$, and $e$ denotes an erasure.
The bit $X_2$ is also sent to Bob over a $\BEC(\epsilon)$ with the same
erasure probability,
while Willie observes it
noiselessly.
We have thereby the assignments
     $X = (X_1, X_2)$,
  $Y = (Y_1, Y_2)$,
  $Z = X_2$,
  $Y_1 \sim \BEC_\epsilon(\cdot|X_1)$, and
  $Y_2 \sim \BEC_\epsilon(\cdot|X_2)$.
The output alphabets of the described channel
are $\cY = \set{0,e,1}^2$ and $\cZ = \cX$.
This channel belongs to the class of reversely degraded broadcast channels \cite[Page 127]{ElGamalKim2011NetworkInformationTheory},
i.e. $P_{Y,Z|X}=P_{Y_1,Y_2,Z|X_1,X_2}=P_{Y_1|X_1}\cdot P_{Z|X_2}\cdot P_{Y_2|Z}$.

Let $P_{X_1} = (p_{X_1}, 1 - p_{X_1})$, i.e. $P_{X_1}(0) = p_{X_1}$,
and similarly $P_{X_2} = (p_{X_2}, 1 - p_{X_2})$.
Since $X_1$ is perfectly secret,
Alice chooses $p_{X_1} = \frac{1}{2}$,
which maximizes the mutual information
$I(X_1; Y_1) = (1-\epsilon) H_2(p_{X_1}) = 1 - \epsilon$,
by \cite[Eq.~(7.15)]{CoverThomas2005ElementsInformationTheory}.
Suppose the default distribution to simulate
for effective secrecy is $Q_Z = (\frac{1}{2}, \frac{1}{2})$.
Then, \cref{eq:achiev.U,eq:esid.dm} require that $p_{X_2} = \frac{1}{2}$.
Thus, the mutual informations of the marginal
channels are
\begin{align}
  I(X; Y)
    & = I(X_1; Y_1) + I(X_2; Y_2)
  \\& = (1-\epsilon) \tup{ H_2(p_{X_1}) + H_2(p_{X_2}) }
  \\& = 2 (1-\epsilon),
  \\
  I(X; Z)
    & = H_2(p_{X_2}) = 1,
\end{align}
where $H_2(p) = -p \log p - (1-p) \log (1-p)$.
Since $\epsilon > \frac{1}{2}$,
$P_{Y|X}$ is less capable than $P_{Z|X}$, i.e.

  $I(X; Y) < I(X; Z),$

and \cref{prop:achiev.X}
guarantees no achievable $Q_Z^n$-ESID rate.
Yet, Alice can achieve
$Q_Z^n$-ESID, by letting $p_{U_2} = \frac{1}{2}$ 

  $P_{U_2}  = (p_{U_2}, 1-p_{U_2})$,
  $X_2  \sim \BSC_q(\cdot|U_2)$,
  and $U    = (X_1, U_2)$,

where
  $\BSC_q =
  \mleft(\begin{smallmatrix}
    1 - q & q \\
    q & 1 - q
  \end{smallmatrix}\mright)
  $
is a binary symmetric channel with crossover probability $q$.
Elementary calculations show that
\begin{align}
  p_{X_2}
    & = p_{U_2} (1 - q) + (1-p_{U_2}) q
    = \frac{1}{2},
  \\
  I(U; Y)
    & = I(X_1; Y_1) + I(U_2; Y_2)
  \\& = (1 - \epsilon) (2 - H_2(q)),
  \\
  I(U; Z)
    & = I(U_2; Z)
   = 1 - H_2(q),
\end{align}
and

$
  I(U; Y) \ge I(U; Z),
$
for all $\epsilon \le 1/(2 - H_2(q))$.

Hence, by \cref{corollary:achiev.U,thm:esid.dm},
\begin{align}
  2(1 - \epsilon)
    & \ge I(X; Y)
  \ge \capESID(W_{YZ|X}, Q_Z^n)
  \\& \ge I(U; Y)
  \\& = (1-\epsilon) (2 - H_2(q)),
\end{align}
where the gap is given by

$
  I(X; Y) - I(U; Y) = (1-\epsilon) H_2(q).
$

For comparison, given $P_{X_2|U_2} = \BSC_q$,
an upper rate bound
for effectively secret message transmission

is
\cite[Theorem~1.1]{HouKramerBloch2017EffectiveSecrecyReliability}
\begin{align}
  R_{\mathsf{EST}}
    & \le I(U; Y) - I(U; Z)
  \\& \le 1 - \epsilon
  \\& \le I(X_1; Y).
\end{align}
On the other hand, the ID capacity with only secrecy,
without stealth, is given by $I(X; Y)$, since there exists $P_U$
such that $I(U; Y) \ge I(U;Z)$
This suggests that for transmission with effective secrecy,
it is optimal for Alice to only encode into the first bit, $X_1$,
while effectively secret ID codes can increase the rate
by exploiting both bits $(X_1, U_2)$.
\Cref{fig:mutInfsRevDegraded}
displays the mutual informations for varying $0 \le p_{U_2} = p_{X_2} \le 1$,
$q = \frac{1}{8}$, and hence
$\epsilon = 1/(2 - H_2(q)) = \frac 3 8 -\frac 7 8 \log_2(\frac 7 8) \approx 0.6866$. Thus, for $p_{U_2} = \frac{1}{2}$,
we have $I(U; Y) = I(U; Z)$.

\begin{figure}
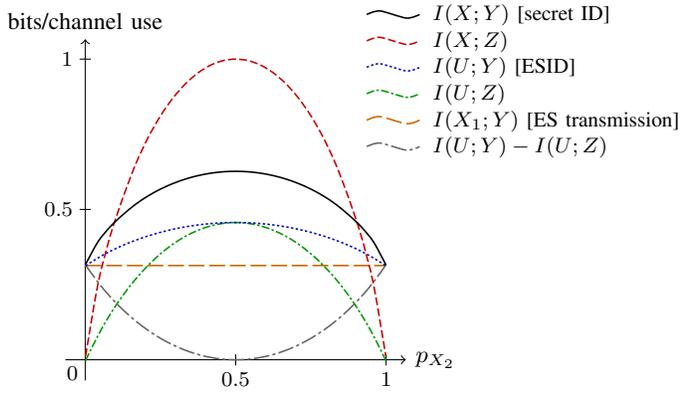

    \centering
    \tikzMutInfsRevDegraded{1/8}{0.6866}
    \caption{\label{fig:mutInfsRevDegraded}
      Mutual informations for the reversely degraded wiretap channel
      in \cref{fig:revDegraded}, and communication tasks
      where the mutual information is achievable,
      where $q = \frac{1}{8}$, $\epsilon = 0.6866 \approx 1/(2-H_2(q))$,
      and $P_{X_1} = Q_Z = (\frac 1 2,\frac 1 2)$.
      For secret ID and effectively secret message transmission,
      the given bounds are tight, if $p_{U_2} = \BSC(q)$, for any $q$.
    }
\end{figure}

\section{Auxiliary identification converse}
\label{sec:auxResults}

Consider the hypothesis testing divergence

\begin{gather}
  D_\alpha(P \| Q)
  := \sup\set{ \gamma : P\tup{ \log \frac{P(Y)}{Q(Y)} \le \gamma } \le \alpha },
\end{gather}
where $Y \sim P$,
and let $(P \otimes Q) (x, y) := P(x) Q(y)$.

\begin{lemma}
  \label{lemma:idConverse.oneShot}
  Let $\lambda_1, \lambda_2, \eta > 0$ and $\alpha := \lambda_1 + \lambda_2 + 2\eta < 1$.
  For every $(M | \lambda_1, \lambda_2)$ ID code $\set{(E_m, \cD_m)}_{m=1}^M$
  for a channel $W_{Y|X} : \cX \to \cP(\cY)$,
  its size is bounded by
  \begin{align}
    \log\log M
    &
    \label{eq:idConverse.oneShot.divergence.PX}
    \le \max_{P_X \in \cE} \min_{Q \in \cP(\cY)}
      D_\alpha (P_{XY} \| P_X \otimes Q)
      + \epsilon
    \\&
    \label{eq:idConverse.oneShot.mutinf}
    \le \max_{P_X \in \cE}
      \frac{1}{1-\alpha}
      I(X; Y)
      + \epsilon,
  \end{align}
  where $P_{XY} = P_X \times W_{Y|X}$, $\epsilon = \log\log \card\cX + 3 \log(1/\eta) + 2$,
  and $\cE = \set{ P_V E_{m=V} : P_V \in \cP([M])}$
  is the convex hull of all encoding distributions.
\end{lemma}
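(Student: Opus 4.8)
The plan is to establish the two displayed inequalities in turn. The first is the genuine identification converse and carries essentially all of the work; the second is a routine comparison between the hypothesis-testing divergence $D_\alpha$ and the ordinary mutual information. The object that drives everything is the information density $\iota_Q(x,y)=\log\frac{W_{Y|X}(y|x)}{Q(y)}$ relative to a reference output distribution $Q$, whose lower $\alpha$-quantile under $P_{XY}$ is by definition $D_\alpha(P_{XY}\|P_X\otimes Q)$; the whole point is that a large identification code forces this quantile to be large.

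For the first bound I would fix $Q$ and work with the output marginal $P_Y=P_X W_{Y|X}$ of the mixture input $P_X=\frac1M\sum_m E_m\in\cE$, thresholding the marginal density $\log\frac{P_Y(y)}{Q(y)}$ at a level $\gamma$ just above its $\alpha$-quantile. Passing from each decision set $\cD_m$ to the reduced set $\widetilde{\cD}_m:=\cD_m\cap\mathcal{G}$, with $\mathcal{G}:=\{y:\log\frac{P_Y(y)}{Q(y)}>\gamma\}$ the high-density set, is useful for two reasons. First, $\mathcal{G}$ retains the mass of each $E_m W_{Y|X}$ that matters (the two reliability conditions, together with the quantile property and two thresholding steps, account for the slack $2\eta$ in $\alpha=\lambda_1+\lambda_2+2\eta$ and the $3\log(1/\eta)$ in $\epsilon$), so $\widetilde{\cD}_m$ still certifies message $m$. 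Second, since $P_Y/Q$ has $Q$-expectation one, Markov's inequality gives $Q(\mathcal{G})\le e^{-\gamma}\approx e^{-D_\alpha}$: the reduced regions all sit inside a set of tiny reference measure. Quantizing $\mathcal{G}$ into $\approx e^{D_\alpha}$ cells (the resolution tuned to match the soft-covering/resolvability exponent) makes each $\widetilde{\cD}_m$, up to negligible mass, a union of cells, and distinguishability forces distinct messages to yield distinct unions, because $\widetilde{\cD}_m=\widetilde{\cD}_{m'}$ would make $E_m W_{Y|X}(\cD_{m'})$ nearly as large as $E_m W_{Y|X}(\cD_m)\ge 1-\lambda_1$, contradicting $E_m W_{Y|X}(\cD_{m'})\le\lambda_2$. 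Hence $M$ is at most the number of subsets of an $\approx e^{D_\alpha}$-element set, so $\log M\le e^{D_\alpha}\log 2$ and $\log\log M\le D_\alpha+O(1)$. A marginalization (data-processing) step replaces the marginal quantile $D_\alpha(P_Y\|Q)$ by the larger joint quantity $D_\alpha(P_{XY}\|P_X\otimes Q)$; minimizing over $Q$ and bounding the code-induced mixture by the worst element of $\cE$ then yields the $\max_{P_X\in\cE}\min_Q$ form, with the residual $\log\log\card{\cX}+2$ absorbing the union bounds over quantized input and reference simplices.

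For the second inequality it is enough to exhibit one good reference, $Q=P_Y$, for which the identity $D(W\|Q|P)=I(X;Y)+D(P_X W\|Q)$ from the preliminaries gives $\expect_{P_{XY}}[\iota_{P_Y}]=I(X;Y)$. I would then bound the lower $\alpha$-quantile of $L:=\iota_{P_Y}$ by its mean: the quantile property yields $P_{XY}(L>\gamma)\ge 1-\alpha$ for every $\gamma<D_\alpha$, so $I(X;Y)=\expect[L]\ge(1-\alpha)\gamma+\expect[L\,\ind{L\le\gamma}]$. The negative tail $\expect[L\,\ind{L\le\gamma}]$ is bounded below by a universal constant (at least $-1/e$) via the log-sum inequality, and rearranging gives $D_\alpha(P_{XY}\|P_X\otimes P_Y)\le\frac{1}{1-\alpha}I(X;Y)+O(1)$, the constant being absorbed into $\epsilon$. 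Since taking the minimum over $Q$ on the left can only decrease it, the claimed relaxation follows.

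The main obstacle is the combinatorial packing step in the first inequality: the quantization resolution must be tuned so that the number of cells is of the exact order $e^{D_\alpha}$ — coarser loses the exponent, finer inflates $\log M$ — while one simultaneously checks that reduction and quantization preserve enough distinguishability that distinct messages really produce distinct cell patterns. The accompanying difficulty is bookkeeping: tracking the three $\eta$-slacks through the two thresholdings and the quantile step, and ensuring that the mixture-over-messages and the reference optimization land in the stated $\max_{P_X\in\cE}\min_Q$ order with all lower-order terms inside $\epsilon$.
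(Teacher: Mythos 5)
The second inequality in your proposal is fine: bounding the lower $\alpha$-quantile of the information density by $\tfrac{1}{1-\alpha}$ times its mean plus a universal constant for the negative part is exactly the right (and standard) argument, and it matches the paper's Markov-inequality step. The problem is the first inequality, which is where all the content lives. The paper does not reprove the identification converse; it invokes Watanabe's minimax converse for identification (his Corollary~2 and Lemma~1) and only adds the observation that the maximization over $\cP(\cX)$ can be restricted to the convex hull $\cE$ of the encoders, because the maximum is used solely to dominate averages of pairs $E_m, E_{m'}$ and the minimax swap needs nothing beyond compactness and convexity of the feasible set. Your attempt to reprove that converse from scratch contains a genuine gap at its combinatorial core.

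Concretely: you partition the high-density set $\mathcal{G}=\set{y:\log\tfrac{P_Y(y)}{Q(y)}>\gamma}$ into roughly $e^{D_\alpha}$ cells and assert that each reduced decision region $\cD_m\cap\mathcal{G}$ is, up to negligible mass, a union of cells, so that $M\le 2^{e^{D_\alpha}}$. There is no reason for this: the $\cD_m$ are arbitrary subsets of $\cY$, an arbitrary subset of $\mathcal{G}$ is nowhere near a union of cells of any fixed partition, and the honest count of distinct reduced decision regions is $2^{\card{\mathcal{G}}}$, which can be astronomically larger than $2^{e^{D_\alpha}}$. The correct mechanism (Han--Verd\'u, and Watanabe's refinement) quantizes on the \emph{input} side, not the output side: by channel resolvability / soft covering, each encoding distribution $E_m$ is replaced by an empirical distribution of $N\approx e^{\gamma}\cdot\mathrm{poly}(1/\eta)$ points of $\cX$ whose induced output distribution approximates $E_m W_{Y|X}$ well enough on $\mathcal{G}$; distinguishability forces distinct messages to have distinct sparse approximations, and the number of such approximations is at most about $\card{\cX}^{N}$, giving $\log\log M\le\log N+\log\log\card{\cX}+O(\log(1/\eta))$. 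This is also the only way the $\log\log\card{\cX}$ term in $\epsilon$ arises; your output-cell count cannot produce it. Without the resolvability step your argument does not bound $M$ at all, so the first displayed inequality remains unproved.
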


\begin{proof}
  By 
  \cite[Corollary~2 and~Lemma~1]{Watanabe2022MinimaxConverseIdentification},
  we have that
  \begin{gather}
    \label{eq:idConverse.oneShot.divergence.PX.orig}
    \log\log M
    \le \max_{P_X \in \cP(\cX)} \min_{Q \in \cP(\cY)}
      D_\alpha (P_{XY} \| P_X \otimes Q)
      + \epsilon.
  \end{gather}
  The maximization over $\cP(\cX)$ is introduced in
  the proof of \cite[Theorem~1]{Watanabe2022MinimaxConverseIdentification}
  to establish the upper bound
  \begin{align}
    \label{eq:watanabe.bound}
    \frac{1}{2} &\brack{ (E_m \times W_{Y|X}) (\cS) + (E_{m'} \times W_{Y|X}) (\cS) }
    \nonumber
    \\
    &\le
    \sup_{P_X \in \cP(\cX)} P_X \times W_{Y|X} (\cS),
  \end{align}
  for some set $\cS \subseteq \cX \times \cY$,
  which ultimately leads to the maximization in
  \cref{eq:idConverse.oneShot.divergence.PX.orig}.
  Clearly, a maximization over all $E_m$, $m \in [M]$
  would suffice in \cref{eq:watanabe.bound}.
  
  To
  establish the minimax equality in \cite[Corollary~2]{Watanabe2022MinimaxConverseIdentification},
  Watanabe used the fact that the supremum is taken over
  a compact and convex set. Since no 
  other
  properties
  of $\cP(\cX)$ are used in~\cite{Watanabe2022MinimaxConverseIdentification},
  it suffices to maximize over the convex hull $\cE$
  of all encoding distributions, 
  
  as in~\cref{eq:idConverse.oneShot.divergence.PX}.
  By Markov's inequality,
  \begin{align}
    D_\alpha(P \| Q)
    &
      = \sup\set{ \gamma : P\tup{ \log \frac{P(Y)}{Q(Y)} > \gamma } \ge 1 - \alpha }
    \\&
      \le \inf\set{ \gamma : P\tup{ \log \frac{P(Y)}{Q(Y)} \ge \gamma } \le 1 - \alpha }
    \\&
      \le \frac{1}{1-\alpha} \expect\intv{ \log\frac{P(Y)}{Q(Y)} }
    \\&
      \le \frac{1}{1-\alpha} D(P\|Q),
  \end{align}
  where the first inequality holds since $p(\gamma) = P\tup{\log\frac{P(Y)}{Q(Y)} \ge \gamma}$ is a decreasing function.
  Thus, for every $P_X \in \cP(\cX)$,
  \begin{align}
    \min_{Q \in \cP(\cY)}
    \nonumber
      & D_\alpha(P_X \times W_{Y|X} \| P_X \otimes Q)
    \\& \le \frac{1}{1-\alpha} D(P_X \times W_{Y|X} \| P_X \otimes P_X W_{Y|X})
    \\& =   \frac{1}{1-\alpha} I(X; Y),
  \end{align}
  and \cref{lemma:idConverse.oneShot} follows.
\end{proof}

\section{Proof of \cref{thm:esid.dm}}
\label{sec:esid.dm.proof}

Consider any
$(M, n | \lambda_1,\lambda_2,\delta,Q_Z)$ ESID code
$\set{ \tup{E_m, \cD_m} }_{m=1}^M$
for a DMWC
$W_{YZ|X} : \cX \to \cP(\cY \times \cZ)$,
where $\lambda_1, \lambda_2, \eta > 0$
satisfy $\alpha := \lambda_1 + \lambda_2 + 2\eta < 1$.
By \cref{lemma:idConverse.oneShot}, 
the rate is upper-bounded by
\begin{gather}
  \label{eq:esid.converse.oneShot.rate}
  R = \frac{1}{n} \log\log M
  \le \max_{P_{X^n} \in \cE}
    \frac{1}{n(1-\alpha)} I(X^n; Y^n) + \frac{\epsilon}{n},
\end{gather}
where $\epsilon = \log\log \card{\cX^n} + 3\log(1/\eta) + 2$,
and $\cE$ is the convex hull of all encoding distributions.
By the chain rule,
\begin{align}
  \frac{1}{n} I(X^n; Y^n)
    & = \frac{1}{n}\sum_{i=1}^n I(X^n; Y_i | Y^{i-1})
  \\& =
  I(X^n; Y_T | T
  , Y^{T-1})
  \\& \le I(X_T; Y_T),
  \label{eq:singleLetter.mutInfXY}
\end{align}
where $T \sim P_T(i) = \frac 1 n \ind{1 \le i \le n}$,
and \cref{eq:singleLetter.mutInfXY} follows from the
concavity of the mutual information in the input argument, and the
Markov condition $Y^{i-1} - X^{i-1}X_{i+1}^n - X_i - Y_i$

for every $i \in [n]$.

In the following, we single-letterize the constraints
on $\cE$.
Similarly to \cite[Eq. (1.49)]{HouKramerBloch2017EffectiveSecrecyReliability},
for any $P_{Z^n} \in \cP(\cZ^n)$,
\begin{align}
  \delta & = D(P_{Z^n} \| Q_Z^{n})
  \\& = \sum_{z^n} P_{Z^n}(z^n) \sum_{i=1}^n \log \frac{1}{Q_Z(z_i)} - H(Z^n)
    \label{eq:esid.converse.dm.expandStealth}
  \\& \ge \sum_{i=1}^n \sum_z P_{Z_i}(z) \log \frac{1}{Q_Z(z)} - \sum_{i=1}^n H(Z_i)
  \\& = n \sum_{i=1}^n P_T(i) D(P_{Z_i} \| Q_Z)
  \\& \ge n D(P_{Z_T} \| Q_Z).
    \label{eq:esid.converse.dm.stealthSingleLetter}
\end{align}
By \cite[Lemma~90]{Ahlswede2021IdenticationOtherProbabilistic},
for sufficiently small $\lambda_1, \lambda_2, \delta > 0$,
\begin{align}
  \smashoperator{\max_{P_A \in \cP([M])}} ~ I(A; Y^n)
    & \ge \delta
  \\& \ge \max_{m \in [M]} D(E_m W_{Z|X}^n \| Q_Z^n)
  \\& \ge \max_{P_A \in \cP([M])} D(E_{m=A} W_{Z|X}^n \| Q_Z^n | P_A)
  \\& \ge \max_{P_A \in \cP([M])} I(A; Z^n).
\end{align}
Thus, there exists $P_A \in \cP([M])$ such that
\begin{align}
  0 & \le \frac{1}{n} [ I(A; Y^n) - I(A; Z^n) ]
    \label{eq:esid.converse.dm.substractMutInf}
  \\& = I(V A; Y_T|V) - I(V A; Z_T|V)
    \label{eq:esid.converse.dm.singleLetter}
  \\& \le \max_v \max_{P_{BX|V=v}} [ I(B; Y_T|V=v) - I(B; Z_T|V=v) ]
    \label{eq:esid.converse.dm.diff.maxV}
  \\& \le \max_{P_{BX}}  [ I(B; Y) - I(B; Z) ],
    \label{eq:esid.converse.dm.rateBound}
\end{align}
where
$B = (V,A)$,
$V = (T, Y_1,\dots,Y_{T-1}, Z_{T+1},\dots,Z_n)$,
the maximizations are with respect to 
$D(P_B P_{X_T|B} W_{Z|X}\| Q_Z) \le \frac{\delta}{n}$,
and \cref{eq:esid.converse.dm.singleLetter} follows
from~\cite[Lemma 17.12]{CsiszarKoerner2011InformationTheoryCoding}.
By~\cite[Lemmas~15.4 and~15.5]{CsiszarKoerner2011InformationTheoryCoding},
we can replace $P_B$ by $P_U \in \cP(\cU)$, $\card\cU \le \card\cX + 2$,
such that
$I(U; Y) = I(B; Y)$,
$I(U; Z) = I(B; Z)$,
and $P_U P_{X|B} = P_B P_{X|B}$.

Since the mutual information is continuous and
the set $\set{ P_{UX} : D(P_X W_{Z|X} \| Q_Z) \le \delta }$
is compact, for $\eta = e^{-\sqrt{n}}$, we have that
\begin{align}
  &\capESID(W_{YZ|X}, Q_Z^n)
  \\&
    \le
    \inf_{\lambda_1,\lambda_2,\delta > 0}
    \lim_{n \to \infty}
    \brack[\Bigg]{
      o(1)
      +
      \frac{1}{1 - \lambda_1 - \lambda_2 - o(1)}
  \nonumber\\&\hspace{7em}
      \max_{\substack{
        P_{UX} \in \cP(\cU \times \cX)
        \\ D(P_X W_{Z|X} \| Q_Z) \le \delta/n
        \\ I(U; Y) \ge I(U; Z)
      }}
      I(X; Y)
    }
  \\&
    = \max_{\substack{
        P_{UX} \in \cP(\cU \times \cX)
        \\ P_X W_{Z|X} = Q_Z
        \\ I(U; Y) \ge I(U; Z)
      }}
      I(X; Y).
\end{align}
For more capable channels, $I(X; Y) \ge I(X; Z)$, for all $P_X \in \cP(\cX)$
and hence, the upper bound is achievable,
by \cref{prop:achiev.X}.
This completes the proof of \cref{thm:esid.dm}.
\qed

\section{Conclusion}
\label{sec:conclusion}

In \cref{corollary:achiev.U}, we improved the lower bound on the
$Q_Z^n$-ESID capacity in the case $I(X; Y) < I(X; Z)$,
where $Q_Z^n$ is a product distribution and $P_X$
satisfies $P_X W_{Z|X} = Q_Z$.
In \cref{thm:esid.dm}, we complement this result by an upper bound
that is tight if $I(X; Y) \ge I(X; Z)$.
The example in \cref{sec:example} illustrates that
in case $I(X; Y) < I(X; Z)$,
the achievability gap between
\cref{corollary:achiev.U} and \cref{thm:esid.dm}
can be substantial, as is the rate advantage of ID
compared to message transmission.
It seems likely that the lower bound in \cref{corollary:achiev.U}
is tight, by results from resolvability theory \cite{HouKramer2013divergence},
since the whole codeword is subject to the stealth constraint.
The difficult part in finding a more stringent converse bound
seems to be the introduction of an auxiliary channel, as
demonstrated in \cref{sec:example}, where the number of possible input
sequences is suitably bounded (see \cref{lemma:idConverse.oneShot}
and the discussion of the gap in Ahlswede's broadcast converse in \cite[Section~2.4]{Bracher2016IdentificationZeroError_phd}).
This is a non-trivial task for ID,
and the authors are not aware of any ID capacity result
involving auxiliary variables in the rate bound.
Usually, in converse proofs \cite{ElGamalKim2011NetworkInformationTheory},
the message is obtained as an auxiliary variable from Fano's inequality,
and then is single-letterized.
However, the mutual information between the message and the channel output
cannot be an upper bound to the ID capacity,
since  ID codes transmit mainly randomness, and the capacity of memoryless
channels is achieved with codes, where only few ($\sqrt{n}$) codeword symbols depend
on the message at all \cite{AhlswedeDueck1989Identificationpresencefeedback}.
To close the achievability gap, new methods need to be developed
to introduce auxiliary variables in ID converse bounds.
This would also be a crucial step in the development of further multi-user converses
for ID and many other communication tasks\cite{Ahlswede2008Generaltheoryinformation,Ahlswede2021IdenticationOtherProbabilistic},
e.g. for the broadcast channel \cite{Ahlswede2008Generaltheoryinformation,Bracher2016IdentificationZeroError_phd}.
To this end, observe that any suitably bounded auxiliary channel forms a polytope
with extremal points $P_{X|U=u}$, $u \in \cU$,
where $\cU$ is the auxiliary alphabet.
This polytope must include the set of stealthy encoding distributions.
If such a pre-channel exists,
the usual ID converse can be applied to it, to obtain an upper bound
that matches the bound in \cref{corollary:achiev.U}.

\ifblind\else
\section*{Acknowledgement}

The authors thank Constantin Runge (Technical University of Munich)
for helpful discussions.

J. Rosenberger, A. Ibrahim and C. Deppe
acknowledge the financial support by the Federal Ministry of
Education and Research of Germany in the program of “Souverän. Digital.
Vernetzt.” Joint project 6G-life, project identification number: 16KISK002. C. Deppe and R. Ferrara were further supported in part by the
BMBF within the grant 16KIS1005. C. Deppe was also supported by the DFG
within the project DE1915/2-1.
B. Bash acknowledges the support from the US National Science Foundation under Grant CCF-2006679.
U. Pereg acknowledges the financial support of the Israel VATAT Junior Faculty
Program in Quantum Science and Technology and the German-Israeli Project
Cooperation (DIP).

\fi

\printbibliography

\end{document}